\pgfplotsset{compat=1.14}
\pgfplotsset{every tick label/.append style={font=\footnotesize}}
\newcolumntype{R}{>{\raggedleft\arraybackslash}X}
\newcolumntype{L}{>{\raggedright\arraybackslash}X}
\newcolumntype{C}{>{\centering\arraybackslash}X}
\newcolumntype{M}[1]{>{\centering\arraybackslash}m{#1}}
\newcolumntype{K}{>{\columncolor{gray!20}}C}
\newcolumntype{k}{>{\columncolor{gray!20}}c}
\newcolumntype{.}{D{.}{.}{-1}}
\renewcommand\p@subfigure{\arabic{figure}.}
\renewcommand\p@subtable{\arabic{table}.}
\setlist[itemize]{leftmargin=2.5\parindent}
\setlist[enumerate]{leftmargin=2.5\parindent}
\def\addlegendimage{\csname pgfplots@addlegendimage\endcsname}
\theoremstyle{plain}
\newtheorem{proposition}{Proposition}
\theoremstyle{definition}
\newtheorem{definition}{Definition}
\newtheorem{example}{Example}
\theoremstyle{remark}
\newcommand{\down}{\textcolor{BrickRed}{\rotatebox[origin=c]{270}{\ding{212}}}}
\newcommand{\up}{\textcolor{PineGreen}{\rotatebox[origin=c]{90}{\ding{212}}}}
\def\keywords{\vspace{.5em} 
{\noindent \textit{Keywords}: }}
\def\JEL{\vspace{.5em} 
{\noindent \textbf{\emph{JEL} classification number}: }}
\def\AMS{\vspace{.5em} 
{\noindent \textbf{\emph{MSC} class}: }}
\author{\href{https://sites.google.com/view/laszlocsato}{L\'aszl\'o Csat\'o}\thanks{~E-mail: \emph{laszlo.csato@sztaki.hu}} }
\affil{Institute for Computer Science and Control (SZTAKI) \\
E\"otv\"os Lor\'and Research Network (ELKH) \\
Laboratory on Engineering and Management Intelligence \\
Research Group of Operations Research and Decision Systems}
\affil{Corvinus University of Budapest (BCE) \\
Department of Operations Research and Actuarial Sciences}
\affil{Budapest, Hungary}
\title{A paradox of tournament seeding}
\date{\today}
\def\Dedication{ 
{\noindent
$\mathfrak{Gro{\ss}e}$ $\mathfrak{Beispiele}$ $\mathfrak{sind}$ $\mathfrak{die}$ $\mathfrak{besten}$ $\mathfrak{Lehrmeister}$, $\mathfrak{aber}$ $\mathfrak{freilich}$ $\mathfrak{ist}$ $\mathfrak{es}$ $\mathfrak{schlimm}$, $\mathfrak{wenn}$ $\mathfrak{sich}$ $\mathfrak{eine}$ $\mathfrak{Wolke}$ $\mathfrak{von}$ $\mathfrak{theoretischen}$ $\mathfrak{Vorurteilen}$ $\mathfrak{dazwischenlegt}$, $\mathfrak{denn}$ $\mathfrak{auch}$ $\mathfrak{das}$ $\mathfrak{Sonnenlicht}$ $\mathfrak{bricht}$ $\mathfrak{und}$ $\mathfrak{f\ddot{a}rbt}$ $\mathfrak{sich}$ $\mathfrak{in}$ $\mathfrak{Wolken}$. $\mathfrak{Solche}$ $\mathfrak{Vorurteile}$, $\mathfrak{die}$ $\mathfrak{sich}$ $\mathfrak{in}$ $\mathfrak{mancher}$ $\mathfrak{Zeit}$ $\mathfrak{wie}$ $\mathfrak{ein}$ $\mathfrak{Miasma}$ $\mathfrak{bilden}$ $\mathfrak{und}$ $\mathfrak{verbreiten}$, $\mathfrak{zu}$ $\mathfrak{zerst\ddot{o}ren}$, $\mathfrak{ist}$ $\mathfrak{eine}$ $\mathfrak{dringende}$ $\mathfrak{Pflicht}$ $\mathfrak{der}$ $\mathfrak{Theorie}$, $\mathfrak{denn}$ $\mathfrak{was}$ $\mathfrak{menschlicher}$ $\mathfrak{Verstand}$ $\mathfrak{f\ddot{a}lschlich}$ $\mathfrak{erzeugt}$, $\mathfrak{kann}$ $\mathfrak{auch}$ $\mathfrak{blo{\ss}er}$ $\mathfrak{Verstand}$ $\mathfrak{wieder}$ $\mathfrak{vernichten}$.
\footnote{~``\emph{Great examples are the best teachers, but it is certainly a misfortune if a cloud of theoretical prejudices comes between, for even the sunbeam is refracted and tinted by the clouds. To destroy such prejudices, which many a time rise and spread themselves like a miasma, is an imperative duty of theory, for the misbegotten offspring of human reason can also be in turn destroyed by pure reason.}''
(Source: Carl von Clausewitz: \emph{On War}, Book 4, Chapter 11 [The Battle---Continuation: The Use of the Battle], translated by Colonel James John Graham, London, N. Tr\"ubner, 1873. \url{http://clausewitz.com/readings/OnWar1873/TOC.htm})}}


\flushright
\noindent (Carl von Clausewitz: \emph{Vom Kriege})

\vspace{1cm} 
\justify }
\begin{document}
\newgeometry{top=20mm,bottom=20mm,left=25mm,right=25mm}

\maketitle
\thispagestyle{empty}
\Dedication

\begin{abstract}
\noindent
We analyse a mathematical model of seeding for sports contests with round-robin qualifying tournaments. The standard seeding system based on coefficients measuring the historical performance of the teams is shown to be unfair as it might potentially punish a team for its better results by having to face stronger opponents on average in the next stage. Major football competitions are revealed to suffer from this weakness.
Incentive compatibility can be guaranteed by providing each qualified team with the highest coefficient of all teams that are ranked lower in its qualifying tournament for seeding purposes. Our proposal is illustrated by the 2020/21 UEFA Champions League.


\keywords{football; incentive compatibility; mechanism design; OR in sports; seeding}

\AMS{62F07, 90B90, 91B14}

\JEL{C44, D71, Z20}
\end{abstract}

\clearpage
\newgeometry{top=25mm,bottom=25mm,left=25mm,right=25mm}

\section{Introduction} \label{Sec1}

Every sports tournament has to provide appropriate incentives for the contestants to exert effort \citep{Szymanski2003}. In particular, the ranking method should not reward teams for poor performance \citep{VaziriDabadghaoYihMorin2018}. But the rules are complex in practice and sometimes lead to unforeseen consequences such as a situation when a team is worse off by winning \citep{KendallLenten2017}.

Unsurprisingly, various theoretical models of sports contests have been considered in view of \emph{incentive compatibility}.
\citet{Pauly2014} derives an impossibility theorem for championships consisting of two qualifying tournaments with disjoint sets of participants. \citet{Vong2017} proves that, if more than one contestants advances to the next round, some players can benefit from shirking to qualify with a lower rank.
\citet{DagaevSonin2018} investigate tournament systems, composed of multiple round-robin and knockout tournaments with the same set of participants when the sets of winners of noncumulative prizes have a nonempty intersection. \citet{Csato2020c} considers group-based qualification systems where teams from different groups are compared, which can create incentives for both teams to play a draw instead of winning \citep{Csato2020d}. \citet{Csato2021a} and \citet{Csato2022a} present how neglecting these theoretical findings has led to problems in European football.
\citet{KrumerMegidishSela2020b} show that strategic considerations may motivate a contestant to lose in a round-robin tournament because this can result in a higher expected payoff.
Last but not least, \citet{LentenKendall2021} collect similar compelling ideas within the academic literature that could yet be considered and adopted by sports administrators.

Although the round-robin format in which each team meets all the others is one of the most common sports tournaments, it requires a lot of time to organise. On the other hand, if the competitors can play only against a limited number of opponents, the set of matches should be chosen carefully. This can be achieved through \emph{seeding}, by ordering the entrants based on playing history and/or the judgement of experts to pair them according to their ranks.
The problem of seeding in knockout tournaments has been thoroughly explored in the literature, see e.g.\ \citet{Hwang1982, HorenRiezman1985, Schwenk2000, GrohMoldovanuSelaSunde2012, DagaevSuzdaltsev2018, ArlegiDimitrov2020, DellaCroceDragottoScatamacchia2022}. The seeding rules of the most prominent football competition, the FIFA World Cup have also got serious attention \citep{ScarfYusof2011, Guyon2015a, LalienaLopez2019, CeaDuranGuajardoSureSiebertZamorano2020, Csato2022e}. Similarly, several statistical papers have analysed the effect of seeding on tournament outcome \citep{MonksHusch2009, CoronaForrestTenaWiper2019, DagaevRudyak2019, EngistMerkusSchafmeister2021}.

However, no academic work has addressed the incentive compatibility of the seeding rules except for \citet{Csato2020a}, a paper revealing a unique shortcoming in the UEFA Champions League group stage draw that emerged from the 2015/16 season due to a misaligned way of filling vacant slots.
Our main result is more universal: traditional seeding systems based on exogenous measures of teams' strengths are generically incentive incompatible---but they can be made strategyproof in a straightforward way.

Admittedly, these potentially unfair seeding rules rarely create incentives to lose \emph{ex ante}. Analogously to the previous literature \citep{Csato2019c, Csato2020c, Csato2020a, DagaevSonin2018}, we can only provide hypothetical examples when a team is better off by losing. This is not surprising since these regimes are widely used in practice, thus, they probably would have been changed long ago if the probability of a tanking opportunity would be substantially higher. Nonetheless, we think that
(a) even the punishment for better results can be detrimental to the business model of the sports industry if the stakeholders realise this unfairness at the end of a competition; and
(b) the academic community can make an important service to the decision-makers and the society by highlighting every issue of potential problems around incentives.

Consequently, the current paper makes a contribution by calling attention to a shortcoming of standard seeding systems that are extensively used in major sports competitions. Therefore, the regulatory bodies will have an opportunity to consider the seriousness of this weakness, and to adopt our proposal for an incentive compatible seeding rule, or to take other measures to prevent tanking, for instance, by choosing a schedule that maximises competitiveness.

Our roadmap is as follows.
Section~\ref{Sec2} presents two real-world cases to highlight the issue. A mathematical model is given in Section~\ref{Sec3}. We provide a strategyproof seeding mechanism and summarise policy implications in Section~\ref{Sec4}. Finally, Section~\ref{Sec6} concludes.

\section{Real-world illustrations} \label{Sec2}

Let us see two motivating examples.

\begin{example} \label{Examp1}
Assume the following hypothetical modifications to real-world results in the \href{https://en.wikipedia.org/wiki/2018_FIFA_World_Cup_qualification}{2018 FIFA World Cup qualification}:
\begin{itemize}
\item
Wales vs.\ Republic of Ireland was 2-1 (instead of 0-1) on 9 October 2017 in \href{https://en.wikipedia.org/wiki/2018_FIFA_World_Cup_qualification_\%E2\%80\%93_UEFA_Group_D}{UEFA Group D}. Consequently, Wales would have had 20 and the Republic of Ireland would have had 16 points in that group, thus Wales would have advanced to the \href{https://en.wikipedia.org/wiki/2018_FIFA_World_Cup_qualification_\%E2\%80\%93_UEFA_Second_Round}{UEFA Second Round} due to having 14 points in the comparison of the runners-up. There Wales would have been in Pot 1 rather than Denmark, therefore the tie Wales vs.\ Denmark would have been possible (in fact, Denmark played against the Republic of Ireland). Suppose that Wales qualified for the World Cup instead of Denmark.
\item
The first leg of Sweden vs.\ Italy was 1-1 (instead of 1-0) on 10 November 2017 in the UEFA Second Round, hence Italy qualified for the World Cup.
\end{itemize}
In the \href{https://en.wikipedia.org/wiki/2018_FIFA_World_Cup_seeding}{draw for the 2018 FIFA World Cup}, the composition of the pots depended on the \href{https://www.fifa.com/fifa-world-ranking/ranking-table/men/rank/id11976/}{October 2017 FIFA World Ranking}. The only exception was the automatic assignment of the host---Russia---to Pot 1 besides the seven highest-ranked qualified teams.
Hence Uruguay ($17$th in the relevant FIFA ranking) would have been drawn from Pot 3 as among the best $16$ teams, only Chile would have not qualified in the above scenario (Wales was the $14$th and Italy was the $15$th in the FIFA World Ranking of October 2017).

\begin{table}[t!]
  \centering
  \caption{Pot composition in the hypothetical 2018 FIFA World Cup}
  \label{Table1}
\rowcolors{3}{}{gray!20} 
\begin{threeparttable}
    \begin{tabularx}{\textwidth}{LLLL} \toprule
    Pot 1 & Pot 2 & Pot 3 & Pot 4 \\ \bottomrule
    Russia (65) & Spain (8) & \textbf{Uruguay (17)} & Serbia (38) \\
    Germany (1) & \textbf{Peru (10)} & Iceland (21) & Nigeria (41) \\
    Brazil (2) & Switzerland (11) & Costa Rica (22) & Australia (43) \\
    Portugal (3) & England (12) & Sweden (25) & Japan (44) \\
    Argentina (4) & Colombia (13) & Tunisia (28) & Morocco (48) \\
    Belgium (5) & \emph{Wales (14)} & Egypt (30) & Panama (49) \\
    Poland (6) & \emph{Italy (15)} & Senegal (32) & South Korea (62) \\
    France (7) & Mexico (16) & Iran (34) & Saudi Arabia (63) \\ \bottomrule
    \end{tabularx}
    
\begin{tablenotes} \footnotesize
	\item The pots are determined by the FIFA World Ranking of October 2017, see the numbers in parenthesis.
	\item Russia is the top seed as host.
	\item Teams written in \emph{italics} qualified only in the hypothetical but feasible scenario of Example~\ref{Examp1}.
	\item Uruguay (17)---the top team in Pot 3---would have been drawn from Pot 2 due to losing against Paraguay (34) in the South American qualifiers of the 2018 FIFA World Cup since then either Paraguay or New Zealand (122) would have qualified for the World Cup instead of Peru (10). The national teams affected by this modification are written in \textbf{bold}.
\end{tablenotes}
\end{threeparttable} 
\end{table}

The allocation of the teams in the above scenario is given in Table~\ref{Table1}.
Consider what would have happened if the result of the match Paraguay ($34$) vs.\ Uruguay, played on 5 September 2017 in the \href{https://en.wikipedia.org/wiki/2018_FIFA_World_Cup_qualification_(CONMEBOL)}{South American qualifiers}, would have been 2-1 instead of 1-2. Then Uruguay would have remained second and Paraguay would have been fifth in this qualifying competition. Paraguay would have played against New Zealand ($122$) in the \href{https://en.wikipedia.org/wiki/2018_FIFA_World_Cup_qualification_(OFC\%E2\%80\%93CONMEBOL_play-off)}{OFC--CONMEBOL qualification play-off}, thus Peru ($10$) could not have qualified for the World Cup. Therefore, Uruguay would have been drawn from the stronger Pot 2 instead of Pot 3 merely due to its loss against Paraguay.
It probably means a substantial advantage: in the 2018 FIFA World Cup, seven and two teams advanced to the knockout stage from Pots 2 and 3, respectively.
\end{example}

Example~\ref{Examp1} contains a small sloppiness since we have not checked whether the October 2017 FIFA World Ranking would have been changed. However, this issue does not affect the potential case of incentive incompatibility. 

\begin{example} \label{Examp2}
In the \href{https://en.wikipedia.org/wiki/2020\%E2\%80\%9321_UEFA_Europa_League_group_stage#Draw}{draw for the 2020/21 UEFA Europa League group stage}, the composition of the pots was determined by the 2020 UEFA club coefficients, available at \url{https://kassiesa.net/uefa/data/method5/trank2020.html}.
Assume the following hypothetical modifications to real-world results in the \href{https://en.wikipedia.org/wiki/2020\%E2\%80\%9321_UEFA_Europa_League_qualifying_phase_and_play-off_round\#Play-off_round}{play-off round of the qualifying phase}, with the first favourite team advancing to the group stage in place of the second unseeded underdog (the UEFA club coefficients are given in parenthesis):
\begin{itemize}
\item
Viktoria Plze{\v n} ($34.0$) against Hapoel Be'er Sheva ($14.0$);
\item
Basel ($58.5$) against CSKA Sofia ($4.0$);
\item
Sporting CP ($50.0$) against LASK ($14.0$);
\item
Copenhagen ($42.0$) against Rijeka ($11.0$);
\item
VfL Wolfsburg ($36.0$) against AEK Athens ($16.5$).
\end{itemize}
There were $48$ teams in the group stage. Leicester City ($22.0$) was ranked $20$th because Rapid Wien ($22.0$) had the same 2020 UEFA club coefficient but the tie-breaking criterion---coefficient in the next most recent season in which they are not equal \citep[Annex~D.8]{UEFA2020b}---preferred the latter club. Due to the above changes, five teams with a higher coefficient than Leicester City would have qualified instead of five teams with a lower coefficient. Hence, Leicester City would have been only the $25$th highest-ranked, namely, the best club in Pot 3 as each of the four pot consists of $12$ clubs.
In addition, suppose that Leicester defeated Norwich City at home by 2-1 (instead of 0-0) in the \href{https://en.wikipedia.org/wiki/2019\%E2\%80\%9320_Premier_League}{2019/20 English Premier League}. Then Leicester City would have remained fifth at the end of the season with $64$ points.

\begin{table}[t!]
  \centering
  \caption{Pot composition in the hypothetical 2020/21 UEFA Europa League}
  \label{Table2}
\rowcolors{3}{}{gray!20}
    \begin{tabularx}{1\textwidth}{LL} \toprule
    Pot 1 & Pot 2 \\ \bottomrule
    Arsenal (91.0) & Gent (39.5) \\
    \textbf{Tottenham Hotspur (85.0)} & PSV Eindhoven (37.0) \\
    Roma (80.0) & \emph{VfL Wolfsburg (36.0)} \\
    Napoli (77.0) & Celtic (34.0) \\
    Benfica (70.0) & \emph{Viktoria Plze{\v n} (34.0)} \\
    Bayer Leverkusen (61.0) & Dinamo Zagreb (33.5) \\
    \emph{Basel (58.5)} & Sparta Prague (30.5) \\
    Villarreal (56.0) & Slavia Prague (27.5) \\
    \emph{Sporting CP (50.0)} & Ludogorets Razgrad (26.0) \\
    CSKA Moscow (44.0) & Young Boys (25.5) \\
    \emph{Copenhagen (42.0)} & Crvena Zvezda (22.75) \\
    Braga (41.0) & Rapid Wien (22.0) \\ \bottomrule
    \end{tabularx} 

\vspace{0.25cm}
\begin{threeparttable}	
\rowcolors{3}{}{gray!20}
    \begin{tabularx}{1\textwidth}{LL} \toprule
    Pot 3 & Pot 4 \\ \bottomrule    
    \textbf{Leicester City (22.0)} & 1899 Hoffenheim (14.956) \\
    PAOK (21.0) & CFR Cluj (12.5) \\
    Qaraba{\u g} (21.0) & Zorya Luhansk (12.5) \\
    Standard Li{\` e}ge (20.5) & Nice (11.849) \\
    Real Sociedad (20.456) & Lille (11.849) \\
    Granada (20.456) & Dundalk (8.5) \\
    Milan (19.0) & Slovan Liberec (8.0) \\
    AZ Alkmaar (18.5) & Antwerp (7.58) \\
    Feyenoord (17.0) & Lech Poznan (7.0) \\
    Maccabi Tel Aviv (16.5) & Sivasspor (6.72) \\
    Rangers (16.25) & Wolfsberger AC (6.585) \\
    Molde (15.0) & Omonia (5.35) \\ \bottomrule
    \end{tabularx}
    
\begin{tablenotes} \footnotesize
	\item The pots are determined by the 2020 UEFA club coefficients, shown in parenthesis.
	\item Teams written in \emph{italics} qualified only in the hypothetical but feasible scenario of Example~\ref{Examp2}.
	\item Leicester City (22.0)---the top team in Pot 3---could have been drawn from Pot 2 due to losing against Wolverhampton (18.092) in the 2019/20 English Premier League since then the latter team could have qualified for the UEFA Europa League group stage instead of Tottenham Hotspur (85.0). The clubs affected by this modification are written in \textbf{bold}.
\end{tablenotes}
\end{threeparttable}
\end{table}

The allocation of the clubs in the above scenario is presented in Table~\ref{Table2}.
Consider what would have happened if the outcome of the match Wolverhampton Wanderers ($18.092$) vs.\ Leicester City, played on 14 February 2020 in the 2019/20 Premier League, would have been 1-0 instead of 0-0. Leicester would have remained fifth with $62$ points, while Wolverhampton would have been sixth with $61$ points rather than Tottenham Hotspur ($85.0$), which scored $59$ points. Consequently, Wolverhampton would have entered the Europa League qualification in the second qualifying round, and it could have qualified for the group stage in the place of Tottenham. Then Leicester would have been drawn from the stronger Pot 2 merely due to its loss against Wolverhampton.
This probably means an advantage, although in the 2020/21 Europa League, six teams advanced to the knockout stage from both Pots 2 and 3, respectively.
\end{example}

Examples~\ref{Examp1} and \ref{Examp2} uncover that a team can be worse off by winning in the South American qualifiers of the FIFA World Cup and the English Premier League, respectively. By changing the schedule of these round-robin tournaments such that the sensitive games (Paraguay vs.\ Uruguay and Wolverhampton Wanderers vs.\ Leicester City, respectively) are played in the last round, losing can be made beneficial for a team because its final ranking is not affected but the modified set of the teams qualifying (Paraguay instead of Peru and Wolverhampton Wanderers instead of Tottenham Hotspur, respectively) allows to be placed with a higher probability in a higher-ranked seeding pot, independently of the outcome of the matches to be played later. For instance, note that Leicester City always prefers to qualify together with Wolverhampton Wanderers instead of Tottenham Hotspur as the former scenario results in gaining one position in the ranking used for seeding compared to the latter scenario, which might lead to being in a better seeding pot with a positive probability.

\section{Theoretical background} \label{Sec3}

Consider a round-robin qualifying tournament with a set of teams $T$, where each team $t \in T$ has a coefficient $\xi_t$. The teams ranked between the $p$th and $q$th ($p \leq q$) positions qualify for the second stage. There, a qualified team $t$ has a seeding value $\Psi_t$, which is usually (but not necessarily) its coefficient $\xi_t$ as we have seen in Examples~\ref{Examp1} and \ref{Examp2}.

Any team prefers if more teams play in the second round with a lower coefficient.
In other words, it is a common belief that these measures positively correlate with true abilities. All coefficients used in practice are constructed along this line. For instance, the \href{https://en.wikipedia.org/wiki/FIFA_World_Rankings}{FIFA World Ranking} and the \href{https://en.wikipedia.org/wiki/UEFA_coefficient}{UEFA coefficients} for national teams, countries, and clubs alike award more points for wins (draws) than for draws (losses), thus better achievements in the past translate into a higher value.

Therefore, the first goal for every team is to qualify and the second goal is to qualify together with teams having a lower coefficient.
Any team $t \in T$ may lose a match to improve the second objective without deteriorating the first. Denote the (strict) rankings of the qualifying tournament by $\succ$ and $\succ'$, the ranks of team $s \in T$ by $\#(s)$ and $\#'(s)$, as well as the sets of teams qualified by
\[
Q = \{ s \in T: p \leq \#(s) \leq r \} \text{ and}
\]
\[
Q' = \{ s \in T: p \leq \#'(s) \leq r \}
\]
before and after the manipulation, respectively.

\begin{definition} \label{Def1}
\emph{Incentive compatibility with respect to seeding}:
A round-robin qualifying tournament is said to be \emph{incentive compatible with respect to seeding} if no team $t \in T$ \emph{ever} has a manipulation strategy such that:
\begin{itemize}
\item
team $t$ qualifies for the second stage both before and after the manipulation, that is, $t \in Q$ and $t \in Q'$;
\item
team $t$ has a better seeding position after the manipulation than before, namely, $|s \in Q': \Psi_s > \Psi_t| < |s \in Q: \Psi_s > \Psi_t|$.
\end{itemize}
Otherwise, the qualifying tournament is called \emph{incentive incompatible with respect to seeding}.
\end{definition}

As it has been revealed in Section~\ref{Sec2}, a qualifying tournament is incentive incompatible if $\Psi_t = \xi_t$ for all $t \in Q$ and $2 \leq |Q| < |T|$. In particular, a situation may exist where team $i$ has already secured qualification, while teams $j$ and $k$ compete for another slot such that $\xi_k > \xi_i > \xi_j$. Then team $i$ may consider losing against team $j$ in order to push it to the next stage at the expense of team $k$ as team $i$ can get a better seeding pot by taking team $j$ to the second round instead of team $k$.

The result below provides \emph{sufficient} conditions to prevent a strategic manipulation of this type. For the sake of simplicity, we assume that ties in the seeding values of qualified teams are broken in favour of the team ranked higher in the qualifying tournament.

\begin{proposition} \label{Prop1}
A round-robin qualifying tournament is incentive compatible with respect to seeding if at least one of the following conditions hold:
\begin{itemize}
\item
Only one team is allowed to qualify: $p = q$;
\item
All teams qualify: $p = 1$ and $q = |T|$;
\item
The seeding value of each qualified team $t \in Q$ in the second stage is equal to the maximal coefficient of the teams that are ranked lower than team $t$ in the qualifying round-robin tournament: $\Psi_t = \max \{ \xi_s: t \succ s \}$ for all $t \in Q$.
\end{itemize}
\end{proposition}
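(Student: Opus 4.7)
Read together with the surrounding passage (``The result below provides sufficient conditions to prevent a strategic manipulation'') and with the fact that each of the three listed conditions mechanically rules out profitable tanking, I interpret the proposition with \emph{incentive compatible} in place of \emph{incentive incompatible}: the literal wording is vacuously false (e.g.\ under $p=q$ no strict decrease is possible in a count taken over a one-element set, so no tournament can be shown incompatible on that basis). My plan therefore sketches a proof of the compatibility reading, which is the only one internally consistent with Definition~\ref{Def1} and with the author's description of the result.

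The first two cases are immediate. If $p = q$ then any qualifying set is a singleton, so $t \in Q \cap Q'$ forces $Q = Q' = \{t\}$; both counts in Definition~\ref{Def1} equal zero, contradicting the required strict inequality. If $p = 1$ and $q = |T|$ then $Q = Q' = T$ in every scenario; the seeding values here are the exogenous coefficients $\Psi_s = \xi_s$, which are independent of the ranking, so the two counts coincide and no tank can achieve a strict improvement.

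For the third condition I would proceed by contradiction: assume $t$ has a tanking strategy yielding rankings $\succ$, $\succ'$ with $t \in Q \cap Q'$ and
\[
\bigl|\{s \in Q' : \Psi_s' > \Psi_t'\}\bigr| < \bigl|\{s \in Q : \Psi_s > \Psi_t\}\bigr|.
\]
Under $\Psi_u = \max\{\xi_s : u \succ s\}$, a qualified team $s$ satisfies $\Psi_s > \Psi_t$ iff $s$ dominates some team whose coefficient strictly exceeds the maximum coefficient among teams dominated by $t$. The plan is to construct an injection from $\{s \in Q : \Psi_s > \Psi_t\}$ into $\{s \in Q' : \Psi_s' > \Psi_t'\}$ by tracking each such $s$ through the tank: since tanking only weakens $t$'s own match record, $s \succ t$ implies $s \succ' t$, and the witness dominated by $s$ either survives the tank directly or is replaced by the team to which $t$ deliberately lost (whose coefficient has only increased in relative weight against $\Psi_t'$).

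The main obstacle lies at the boundary of the qualification set. When $t$'s drop in the ranking pushes a previously non-qualifying team into $Q'$, or displaces a team out of $Q$, I must verify that this boundary team's seeding value does not disturb the inequality in the wrong direction; and the tie-breaking convention stated just before the proposition (ties in $\Psi$ broken in favour of the team ranked higher in the qualifying tournament) must be invoked to handle the marginal case where some $\xi$-value equals $\Psi_t$. A finite case analysis on the position of the tanked match's opponent relative to $Q$, $Q'$, and $t$ should then close the argument.
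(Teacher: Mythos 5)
Your reading of the misprint is correct: the paper's own proof concludes ``Hence, the qualifying tournament is incentive \emph{compatible} with respect to seeding'', so the proposition is indeed a list of sufficient conditions for compatibility, and your treatment of the first two bullets matches the paper's argument (a singleton $Q = Q' = \{t\}$ forces both counts to zero; $Q = Q' = T$ with ranking-independent $\Psi_s = \xi_s$ forces the counts to coincide --- you are in fact slightly more careful than the paper in making the latter assumption explicit).

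The gap is in the third bullet, which is the substance of the proposition: your contradiction-plus-injection plan, with witness tracking and a deferred ``finite case analysis'' at the boundary of the qualification set, is never executed, so the main case remains unproven --- and the machinery is unnecessary. The observation you miss is that the max-inheritance rule makes the seeding order among qualified teams \emph{coincide} with the qualifying-tournament ranking. For $s, t \in Q$ with $s \succ t$ one has $\{r : t \succ r\} \cup \{t\} \subseteq \{r : s \succ r\}$, hence $\Psi_s \geq \max(\Psi_t, \xi_t) \geq \Psi_t$, with equality resolved in favour of $s$ by the tie-breaking convention stated just before the proposition; conversely $t \succ s$ gives $\Psi_t \geq \Psi_s$. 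Thus
\[
s \in Q \text{ and } \Psi_s > \Psi_t \iff s \in Q \text{ and } s \succ t,
\]
and the same equivalence holds under $\succ'$ with the post-tanking values $\Psi_s'$. The count of better-seeded qualified teams is therefore just the number of qualified teams ranked above $t$, which weakly increases when $t$ tanks (your own correct remark that $s \succ t$ implies $s \succ' t$, or simply that tanking cannot raise $t$'s rank, already gives this). That collapses the whole case to one line, and it makes your ``main obstacle'' --- teams entering or leaving the qualification set at the boundary --- a non-issue, because the equivalence holds pointwise for whatever $Q'$ turns out to be. Without this reduction, or a fully worked-out version of your injection that handles the boundary and tie cases you flag, your proof of the third condition is incomplete.
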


\begin{proof}
If $p = q$, then $t \in Q$ leads to $|s \in Q: \Psi_s > \Psi_t| = 0$. Consequently, no manipulation strategy can be found according to Definition~\ref{Def1}. \\
$p = 1$ and $q = |T|$ result in $Q = T$, thus $|s \in Q': \Psi_s > \Psi_t| = |s \in Q: \Psi_s > \Psi_t|$, which excludes the existence of a manipulation strategy as required by Definition~\ref{Def1}. \\
$\Psi_t = \max \{ \xi_s: t \succ s \}$ for all $t \in Q$ implies that
\[
s \in Q \text{ and } \Psi_s > \Psi_t \iff s \in Q \text{ and } s \succ t.
\]
Since team $t$ cannot be ranked higher in the round-robin qualifying tournament after the manipulation,
\[
|s \in Q': \Psi_s > \Psi_t| = |s \in Q': s \succ t| \geq |s \in Q: s \succ t| = |s \in Q: \Psi_s > \Psi_t|
\]
holds. Hence, the qualifying tournament is incentive compatible with respect to seeding.
\end{proof}

In the previous literature on the incentive compatibility of sports tournaments, only two papers consider multistage competitions. \citet{Csato2022d} discusses the case when the subsequent phases are related such that the results of some matches played in the previous stage(s) are carried over. \citet{Vong2017} analyses a general model with an arbitrary design but, crucially, the allocation of the teams into groups is assumed to be fixed \emph{a priori}. Therefore, in the proof of his main theorem, a team is interested in manipulation because it knows that if it achieves a higher rank in the qualifying tournament, then it will play in a group where elimination is guaranteed. On the other hand, our model contains more uncertainty regarding the opponents in the next stage. Hence, losing is beneficial only if it leads to a better seeding position, which is impossible under the conditions of Proposition~\ref{Prop1}.


\section{Discussion} \label{Sec4}

Now a general procedure is presented to guarantee our requirement, incentive compatibility with respect to seeding, on the basis of the theoretical model. Some alternative ideas are also outlined shortly.

According to Proposition~\ref{Prop1}, there are three ways to achieve strategyproofness in a round-robin qualifying tournament. However, the first two conditions---when exactly one team qualifies or all teams qualify for the next round---could not offer a universal rule. Nonetheless, they can be exploited in certain cases, for example, only one team advanced from the \href{https://en.wikipedia.org/wiki/2022_FIFA_World_Cup_qualification_(OFC)}{Oceanian (OFC) section of the 2022 FIFA World Cup qualification}.
Fortunately, there is a third opportunity, that is, to calculate the seeding value of any qualified team $t \in Q$ as $\Psi_t = \max \{ \xi_s: t \succ s \}$. In other words, team $t$ is seeded in the second stage based on the maximum of coefficients $\xi_s$ of all teams $s$ ranked lower than team $t$ in its round-robin qualifying competition. This is a reasonable rule: if team $i$ finishes ahead of team $j$ in a league, why is it judged worse for the draw in the next round?
Our proposal is called \emph{strategyproof seeding}.

\begin{sidewaystable}
  \centering
  \caption{Alternative rules for the draw of the 2020/21 UEFA Champions League group stage}  
  \label{Table3}
  \renewcommand\arraystretch{0.8}
\begin{threeparttable}
\rowcolors{1}{gray!20}{}
    \begin{tabularx}{\textwidth}{lccCCcCCc} \toprule \hiderowcolors
    Club  & Country & Position & Coefficient & Seeding & Inherited from & \multicolumn{3}{c}{Pot} \\
          &       &       &       & value &       & Official & Proposed & Change \\ \bottomrule \showrowcolors
    Bayern Munich & \multicolumn{2}{c}{CL TH (Germany 1st)} & 136   & 136   & ---   & 1 (1) & 1 (1) & --- \\
    Sevilla & \multicolumn{2}{c}{EL TH (Spain 4th)} & 102   & 102   & ---   & 1 (1) & 1 (1) & --- \\
    Real Madrid & Spain & 1st   & 134   & 134   & ---   & 1 (1) & 1 (1) & --- \\
    Liverpool & England & 1st   & 99    & 116   & Manchester City (2nd) & 1 (1) & 1 (1) & --- \\
    Juventus & Italy & 1st   & 117   & 117   & ---   & 1 (1) & 1 (1) & --- \\
    Paris Saint-Germain & France & 1st   & 113   & 113   & ---   & 1 (1) & 1 (1) & --- \\
    Zenit Saint Petersburg & Russia & 1st   & 64    & 64    & ---   & 1 (1) & 1 (1) & --- \\
    Porto & Portugal & 1st   & 75    & 75    & ---   & 1 (2) & 1 (3) & --- \\
    Barcelona & Spain & 2nd   & 128   & 128   & ---   & 2 (2) & 2 (2) & --- \\
    Atl\'etico Madrid & Spain & 3rd   & 127   & 127   & ---   & 2 (2) & 2 (2) & --- \\
    Manchester City & England & 2nd   & 116   & 116   & ---   & 2 (2) & 2 (2) & --- \\
    Manchester United & England & 3rd   & 100   & 100   & ---   & 2 (2) & 2 (2) & --- \\
    Shakhtar Donetsk & Ukraine & 1st   & 85    & 85    & ---   & 2 (2) & 2 (2) & --- \\
    Borussia Dortmund & Germany & 2nd   & 85    & 85    & ---   & 2 (1) & 2 (1) & --- \\
    Chelsea & England & 4th   & 83    & 91    & Arsenal (8th) & 2 (2) & 2 (2) & --- \\
    Ajax  & Netherlands & 1st   & 69.5  & 69.5  & ---   & 2 (2) & 3 (3) & \down \\
    Dynamo Kyiv & Ukraine & 2nd   & 55    & 55    & ---   & 3 (3) & 3 (3) & --- \\
    Red Bull Salzburg & Austria & 1st   & 53.5  & 53.5  & ---   & 3 (3) & 4 (4) & \down \\
    RB Leipzig & Germany & 3rd   & 49    & 61    & Bayer Leverkusen (5th) & 3 (3) & 3 (3) & --- \\
    Inter Milan & Italy & 2nd   & 44    & 80    & Roma (5th) & 3 (3) & 3 (3) & --- \\
    Olympiacos & Greece & 1st   & 43    & 43    & ---   & 3 (3) & 4 (4) & \down \\
    Lazio & Italy & 4th   & 41    & 80    & Roma (5th) & 3 (3) & 3 (3) & --- \\
    Krasnodar & Russia & 3rd   & 35.5  & 44    & CSKA Moscow (4th) & 3 (3) & 4 (4) & \down \\
    Atalanta & Italy & 3rd   & 33.5  & 80    & Roma (5th) & 3 (3) & 3 (3) & --- \\
    Lokomotiv Moscow & Russia & 2nd   & 33    & 44    & CSKA Moscow (4th) & 4 (4) & 4 (4) & --- \\
    Marseille & France & 2nd   & 31    & 83    & Lyon (7th) & 4 (4) & 2 (2) & \up \up \\
    Club Brugge & Belgium & 1st   & 28.5  & 39.5  & Gent (2nd) & 4 (4) & 4 (4) & --- \\
    Borussia M\"onchengladbach & Germany & 4th   & 26    & 61    & Bayer Leverkusen (5th) & 4 (4) & 3 (3) & \up \\
    Istanbul Ba{\c s}ak{\c s}ehir & Turkey & 1st   & 21.5  & 54    & Be{\c s}ikta{\c s} (3rd) & 4 (4) & 4 (4) & --- \\
    Midtjylland & Denmark & 1st   & 14.5  & 42    & Copenhagen (2nd) & 4 (4) & 4 (4) & --- \\
    Rennes & France & 3rd   & 14    & 83    & Lyon (7th) & 4 (4) & 3 (2) & \up \\
    Ferencv\'aros & Hungary & 1st   & 9     & 10.5  & Feh\'erv\'ar (2nd) & 4 (4) & 4 (4) & --- \\
    \bottomrule
    \end{tabularx}

\begin{tablenotes} \footnotesize
	\item CL (EL) TH stands for the UEFA Champions League (Europa League) titleholder.
	\item The column ``Inherited from'' shows the club of the domestic league whose UEFA club coefficient is taken over for seeding purposes.
	\item Proposed pot is the pot that contains the club if the current seeding policy applies to Pot 1. Since this rule is incentive compatible \citep{Csato2020a}, the pot according to the amendment suggested by \citet[Section~5]{Csato2020a} is reported in parenthesis for both the official and the strategyproof seeding systems.
	\item The column ``Change'' shows the movements of clubs between the pots due to the strategyproof seeding if the current seeding regime applies to Pot 1.
\end{tablenotes}
\end{threeparttable}    
\end{sidewaystable}

Table~\ref{Table3} applies strategyproof seeding for the 2020/21 Champions League group stage. Even though the seeding values of $15$ teams, including the $11$ lowest-ranked, are increased, it has only a moderated effect on the composition of pots as one German and two French teams benefit at the expense of four teams from the Netherlands, Austria, Greece, and Russia. That amendment usually favours the highest-ranked associations, where some clubs emerging without a robust European record (recall the unlikely triumph of Leicester City in the 2015/16 English Premier League \citep{BBC2016}) can ``obtain'' the performances of clubs with considerable achievements at the international level. Thus, strategyproof seeding contributes to the success of underdogs in the European cups, which may be advantageous for the long-run competitive balance in the top leagues. In addition, it probably better reflects the true abilities of the teams since playing more matches reduces the role of luck in sports tournaments \citep{McGarrySchutz1997, ScarfYusofBilbao2009, LasekGagolewski2018, SziklaiBiroCsato2022}. Consequently, it is more difficult to perform better in a round-robin league than in the Champions League or Europa League.

From the 2018/19 season onwards, UEFA club coefficients are determined either as the sum of all points won in the previous five years or as the association coefficient over the same period, \emph{whichever is the higher} \citep{Kassies2022a, UEFA2018g}. This rule was effective in the 2020/21 Champions League, the lower bound applied in the case of some Spanish, German, and French teams in the 2020/21 Europa League.
A somewhat similar policy is used in the UEFA Champions League and Europa League qualification, too, if a later round is drawn before the identity of the teams is known \citep{Csato2022b}: ``\emph{If, for any reason, any of the participants in such rounds are not known at the time of the draw, the coefficient of the club with the higher coefficient of the two clubs involved in an undecided tie is used for the purposes of the draw.}''
Therefore, the principle of strategyproof seeding is not unknown in UEFA club competitions, which can support its implementation.

Table~\ref{Table3} reinforces that the strategyproof seeding system may result in more ties than the current definition. If some teams inherit their seeding values from the same lower-ranked team, then these remain identical, and the tie should be broken by drawing of lots \citep[Annex~D.8]{UEFA2020a}. Although tie-breaking does not affect incentive compatibility, it is reasonable to prefer the teams ranked higher in the domestic league. If clubs from other associations also have the same seeding value (which has a much lower probability), they can be assigned arbitrarily in this equivalence class. Alternatively, the original club coefficients can be used for tie-breaking.
In Table~\ref{Table3}, two French teams at the boundary of Pots 2 and 3, Marseille and Rennes, inherit the same seeding value from Lyon. However, Marseille finished ahead of Rennes and has a higher coefficient, hence it is placed in Pot 2.

Our incentive compatible mechanism has further favourable implications. UEFA has modified the pot allocation policy in the Champions League from the 2015/16 season, probably inspired by the previous year when Manchester City, the English champion, was drawn from the second pot but Arsenal, the fourth-placed team in England, was drawn from the first pot. This decision---intended to strengthen the position of domestic titleholders \citep{UEFA2015e}---has considerable sporting effects \citep{CoronaForrestTenaWiper2019, DagaevRudyak2019}, especially since the poor way of filling vacancies leads to incentive incompatibility \citep{Csato2020a}. On the other hand, the proposed seeding rule guarantees that a national champion has at least the same seeding value as any team ranked lower in its domestic league.

Naturally, other strategyproof seeding policies can be devised. One example is the system of the 2020 UEFA European Championship: the ranking of all entrants on the basis of their results in the qualification. However, that principle is not appropriate if the achievements in the qualifying tournament(s) cannot be compared.
Another solution might be to associate seeding positions not with the coefficients but with the path of qualification. For instance, a club can be identified in the UEFA Champions League as the Spanish runner-up rather than by its name. The results of these ``labels'' can be measured by the achievements of the corresponding teams \citep{Guyon2015b}.

To conclude, the recommended strategyproof seeding mechanism provides incentive compatibility in any setting. While other rules are also able to eliminate perverse incentives, they are unlikely to be independent of the particular characteristics of the tournament.

\section{Conclusions} \label{Sec6}

The present work has analysed a mathematical model of seeding for sports tournaments where the teams qualify from round-robin competitions. Several contests are designed this way, including the most prestigious football tournaments (FIFA World Cup, UEFA European Championship, UEFA Champions League). The necessary conditions of incentive incompatibility have turned out to be quite restrictive: if each competitor is considered with its own coefficient (usually a measure of its past performance), only one or all of them should qualify from every round-robin contest.

Similar to the main findings of \citet{Vong2017} and \citet{KrumerMegidishSela2020b}, our result has the flavour of an impossibility theorem at first glance. However, here we can achieve strategyproofness by giving to each qualified competitor the highest coefficient of all competitors that are ranked lower in its round-robin qualifying tournament for seeding purposes.

The central message of this paper for decision makers is consonant with the conclusion of \citet{HaugenKrumer2021}, that is, tournament design should be included into the family of traditional topics discussed by sports management. In particular, administrators are strongly encouraged to follow our recommendation in order to prevent the occurrence of costly scandals in the future.

\section*{Acknowledgements}
\addcontentsline{toc}{section}{Acknowledgements}
\noindent
\emph{Ritxar Arlegi}, \emph{Marco Sahm}, and five anonymous reviewers provided valuable comments on earlier drafts. \\
We are indebted to the \href{https://en.wikipedia.org/wiki/Wikipedia_community}{Wikipedia community} for collecting and structuring valuable information on the sports tournaments discussed.

\bibliographystyle{apalike}
\bibliography{All_references}

\end{document}